\newtheorem{theorem}{Theorem}[section]
\newtheorem{corollary}[theorem]{Corollary}
\newtheorem{conjecture}[theorem]{Conjecture}
\newtheorem{lemma}[theorem]{Lemma}
\newtheorem{remark}[theorem]{Remark}
\newif\ifFULL
\newcommand{\IGNORE}[1]{}
\tikzset{
    >=stealth',
    pil/.style={
           ->,
           thick,
           shorten <=2pt,
           shorten >=2pt,}
}
\tikzset{->-/.style={decoration={
  markings,
  mark=at position .5 with {\arrow{>}}},postaction={decorate}}}
\newcommand{\bm}{\mathbf}
\newcommand{\setR}{\mathbb{R}}
\newcommand{\p}{\ensuremath{\mathbb{P}}}
\newcommand{\E}{\mathbb{E}}
\newcommand{\eat}[1]{}
\newcommand{\hide}[1]{{\Large \color{red} Contents here are hidden! To reveal contents, remove this command.}}
\newcommand{\disc}{\ensuremath{\mathsf{disc}}}
\newcommand{\poly}{\ensuremath{\mathsf{poly}}}
\newcommand{\tr}{\mathsf{tr}}
\newcommand{\herdisc}{\mathsf{herdisc}}
\newcommand{\detlb}{\mathsf{detLB}}
\newcommand{\pvdisc}{\mathsf{pvdisc}}
\newcommand{\hpvdisc}{\mathsf{herpvdisc}}
\newcommand{\SDP}{\mathsf{SDP}}
\title{A Tighter Relation Between Hereditary Discrepancy and Determinant Lower Bound}
\author{Haotian Jiang \thanks{University of Washington, Seattle, USA. \texttt{jhtdavid@cs.washington.edu.}} \\
\and Victor Reis \thanks{University of Washington, Seattle, USA. \texttt{voreis@cs.washington.edu. }}}
\date{\today}
\begin{document}

\begin{titlepage}
  \maketitle
  
  \begin{abstract}
In seminal work, Lov\'asz, Spencer, and Vesztergombi [European J. Combin., 1986] proved a lower bound for the hereditary discrepancy of a matrix $A \in \mathbb{R}^{m \times n}$ in terms of the maximum $|\det(B)|^{1/k}$ over all $k \times k$ submatrices $B$ of $A$. We show algorithmically that this determinant lower bound can be off by at most a factor of $O(\sqrt{\log (m) \cdot \log (n)})$, improving over the previous bound of $O(\log(mn) \cdot \sqrt{\log (n)})$ given by Matou\v{s}ek [Proc. of the AMS, 2013]. Our result immediately implies $\herdisc(\mathcal{F}_1 \cup \mathcal{F}_2) \leq O(\sqrt{\log (m) \cdot \log (n)}) \cdot \max(\herdisc(\mathcal{F}_1), \herdisc(\mathcal{F}_2))$, for any two set systems $\mathcal{F}_1, \mathcal{F}_2$ over $[n]$ satisfying $|\mathcal{F}_1 \cup \mathcal{F}_2| = m$. Our bounds are tight up to constants when $m = O(\mathrm{poly}(n))$ due to a construction of P\'alv\"olgyi [Discrete Comput. Geom., 2010] or the counterexample to Beck's three permutation conjecture by Newman, Neiman and Nikolov [FOCS, 2012].

  \end{abstract}
  \thispagestyle{empty}
\end{titlepage}

\section{Introduction}

Given a matrix $A \in \mathbb{R}^{m \times n}$, the {\em discrepancy} of $A$ is $\disc(A) := \min_{\bm{x} \in \{-1,+1\}^n} \|A \bm{x}\|_\infty$. 
The {\em hereditary discrepancy} of $A$ is defined as $\herdisc(A) := \max_{S \subseteq [n]} \disc(A_S)$, where $A_S$ denotes the restriction of the matrix $A$ to columns in $S$.  
For a set system $\mathcal{F}$, $\disc(\mathcal{F})$ and $\herdisc(\mathcal{F})$ are defined to be $\disc(A_\mathcal{F})$ and $\herdisc(A_\mathcal{F})$, where $A_\mathcal{F}$ is the incidence matrix of $\mathcal{F}$. 

In seminal work, Lov\'asz, Spencer, and Vesztergombi~\cite{lsv86} introduced a powerful tool, known as the {\em determinant lower bound}, for bounding hereditary discrepancy:
\begin{align*}
\detlb(A) := \max_{k \in \mathbb{N}} \max_{\substack{(S,T) \subseteq [m] \times [n] \\ |S| = |T| = k}} |\det(A_{S,T})|^{1/k} ,
\end{align*}
where $A_{S,T}$ denotes the restriction of $A$ to  rows in $S$ and columns in $T$. In particular, they showed that $\herdisc(A) \geq \frac{1}{2}  \detlb(A)$ for any matrix $A$. 
A reverse relation was established by Matous\v{e}k~\cite{m13}, who showed that $\herdisc(A) \leq O(\log(mn) \sqrt{\log (n)}) \cdot \detlb(A)$. 
However, Matous\v{e}k's bound does not match the largest known gap of $\Theta(\log (n))$ between $\herdisc(A)$ and $\detlb(A)$, given by a construction of P\'alv\"olgyi~\cite{p10} or the counter-example to Beck's three permutation conjecture~\cite{nnn12}.

Our main result is the following improvement over Matous\v{e}k's bound in~\cite{m13}. 

\begin{theorem} \label{thm:main}
Given a matrix $A \in \mathbb{R}^{m \times n}$, one can efficiently find $\bm{x} \in \{+1,-1\}^n$ such that $\| A \bm{x} \|_\infty \leq O(\sqrt{\log (m) \cdot \log (n)} \cdot \detlb(A))$. 
\end{theorem}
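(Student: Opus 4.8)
The plan is to split the target bound into two more robust estimates: one converting a good ellipsoidal embedding of the columns of $A$ into a small $\pm 1$ coloring (at a cost of $\sqrt{\log m}$), and one producing such an embedding from the determinant lower bound (at a cost of $\sqrt{\log n}$). After rescaling we may assume $\detlb(A) \le 1$; then every entry of $A$ has absolute value at most $1$, and more generally $|\det(A_{S,T})| \le 1$ for every square submatrix, and the goal becomes to find $\bm{x}$ with $\|A\bm{x}\|_\infty \le O(\sqrt{\log m \cdot \log n})$.

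Step one is to reduce to the factorization norm $\gamma_2(A) := \min\{\max_i \|u_i\|_2 \cdot \max_j \|v_j\|_2 : A_{ij} = \langle u_i, v_j\rangle\}$, equivalently the least $R$ such that the columns of $A$ lie in an origin-symmetric ellipsoid $\mathcal{E} \subseteq R \cdot [-1,1]^m$. This quantity is computable by semidefinite programming, is monotone under deleting columns (hence bounds $\herdisc$ whenever it is bounded), and, most importantly, admits an algorithmic rounding: writing $A = UV$ at the optimum with $\|u_i\|_2 \le 1$ and $\|v_j\|_2 \le \gamma_2(A)$, the symmetric convex body $K := \{w : \max_i |\langle u_i, w\rangle| \le 1\}$ has the property that its dilate $O(\sqrt{\log m}) \cdot K$ carries Gaussian measure at least $1/2$ (by a union bound over the $m$ rows), so the algorithmic form of Banaszczyk's theorem in the Koml\'os setting (Bansal--Dadush--Garg) produces $\bm{x} \in \{\pm 1\}^n$ with $\max_i |\langle u_i, V\bm{x}\rangle| \le O(\sqrt{\log m}) \cdot \gamma_2(A)$, i.e. $\|A\bm{x}\|_\infty \le O(\sqrt{\log m}) \cdot \gamma_2(A)$. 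Thus it suffices to prove that $\detlb(A) \le 1$ implies $\gamma_2(A) \le O(\sqrt{\log n})$.

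This last estimate is the crux, and the place where one must beat Matou\v{s}ek's argument, which incurs extra $\log$ factors precisely here. I would argue through the ellipsoid description: if no ellipsoid $\mathcal{E} \supseteq \{a_1, \dots, a_n\}$ lies inside $t \cdot [-1,1]^m$, then the feasibility problem ``find positive semidefinite $M$ with $a_j^\top M a_j \le 1$ for all $j$ and $(M^{-1})_{ii} \le t^2$ for all $i$'' is infeasible, and convex duality yields a fractional determinant witness: weights $\mu_j \ge 0$ on the columns and $\lambda_i \ge 0$ on the coordinate slabs obeying a matrix identity that, after rounding, forces a $k \times k$ submatrix $B$ with $|\det B|^{1/k} \ge \Omega(t / \sqrt{\log n})$. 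The $\sqrt{\log n}$ is the cost of choosing an integral set of $k$ columns that realizes the spectral profile of the fractional matrix $\sum_j \mu_j a_j a_j^\top$ — a sampling/concentration argument over the $n$ columns, and the technically delicate part of the whole proof. Taking $t = C \sqrt{\log n}$ with $C$ a large constant then contradicts $\detlb(A) \le 1$, so $\gamma_2(A) = O(\sqrt{\log n})$.

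Assembling the pieces gives $\herdisc(A) \le O(\sqrt{\log m}) \cdot \gamma_2(A) \le O(\sqrt{\log m \cdot \log n}) \cdot \detlb(A)$, and each step — solving the $\gamma_2$ semidefinite program, dualizing and rounding to exhibit the embedding, and running the Bansal--Dadush--Garg rounding — is polynomial time, which yields the algorithmic statement. (An alternative is a direct iterated partial-coloring recursion in the style of Lovett--Meka, in which the determinant bound controls the effective row norms at each of the $O(\log n)$ scales, but I expect the ellipsoid route to be the cleaner one to push to the tight constant.) As a consistency check: in P\'alv\"olgyi's construction and in the Newman--Neiman--Nikolov example one has $\detlb = \Theta(1)$, $\herdisc = \Theta(\log n)$ and $m = \mathrm{poly}(n)$, which pins $\gamma_2 = \Theta(\sqrt{\log n})$ there and makes both inequalities above tight up to constants.
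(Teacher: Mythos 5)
Your first two steps are sound: $\gamma_2$ is hereditary and SDP-computable, and the algorithmic Banaszczyk rounding (Bansal--Dadush--Garg) does give $\herdisc(A) \le O(\sqrt{\log m})\cdot \gamma_2(A)$ exactly as you describe. The gap is in your third step, which you yourself flag as ``the crux.'' The estimate $\gamma_2(A) \le O(\sqrt{\log n})\cdot\detlb(A)$ is not a known theorem, and your sketch of it does not prove it: the phrase ``a sampling/concentration argument over the $n$ columns'' is the entire content of the argument at the point where it matters, and that rounding step --- converting fractional dual weights $\mu_j$ on columns into an integral choice of a $k\times k$ submatrix --- is precisely where Matou\v{s}ek's proof loses its log factor via bucketing, and where there is no reason to expect only a $\sqrt{\log n}$ loss rather than $\log n$ or worse. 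Note also how strong the claimed inequality is: combined with the trivial $\detlb(A)\le 2\,\herdisc(A)$, it would yield $\gamma_2(A) \le O(\sqrt{\log n})\cdot\herdisc(A)$, which would be a strict improvement over the Matou\v{s}ek--Nikolov--Talwar bound $\gamma_2(A)\le O(\log\min(m,n))\cdot\herdisc(A)$ that the paper itself cites as the state of the art. You are thus implicitly assuming an unproved strengthening of a separate result.

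The paper does not go through $\gamma_2$ at all; it introduces a weaker intermediate quantity, \emph{hereditary partial vector discrepancy} $\hpvdisc(A)$, defined by an SDP asking for vectors $\bm{v}_j$ of norm at most $1$ with $\sum_j\|\bm{v}_j\|_2^2 \ge |S|/2$ and $\|\sum_j a_{ij}\bm{v}_j\|_2\le\lambda$. The crucial point is that Lemma~\ref{lem:detlb_leq_hpvdisc} gives $\hpvdisc(A)\le O(\detlb(A))$ with \emph{no log factor}: the dual of this particular SDP directly produces a set $J$ of at least $n/3$ columns on which the reweighted Gram matrix is bounded below by $(\lambda^2/2n) I$, and Cauchy--Binet plus $\sum_i w_i=1$ forces a large $|J|\times|J|$ subdeterminant --- no sampling, no bucketing. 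Both logarithmic factors in the theorem then come from the \emph{rounding} (Lemma~\ref{lem:bansal}): a Bansal-style random walk that runs for $O(\log n / s^2)$ steps to freeze all coordinates, so that the total quadratic variation of $\langle\bm{a}_j,\bm{x}_t\rangle$ is $O(\lambda^2\log n)$, and Freedman's inequality plus a union bound over $m$ rows gives $\|A\bm{x}_T\|_\infty\le O(\sqrt{\log m\log n})\cdot\hpvdisc(A)$. So the decomposition in the paper is ``no loss in the SDP-vs-determinant comparison, $\sqrt{\log n\log m}$ loss in rounding,'' whereas yours is ``$\sqrt{\log m}$ loss in rounding, $\sqrt{\log n}$ loss in the $\gamma_2$-vs-determinant comparison'' --- and it is the second half of yours that is unsubstantiated and, as far as is currently known, open.
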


Restricting to an arbitrary subset of the columns of $A$, one immediately obtains the following:

\begin{corollary}
For any matrix $A \in \mathbb{R}^{m \times n}$, $\herdisc(A) \leq O(\sqrt{\log (m) \cdot \log (n)} \cdot \detlb(A))$. 
\end{corollary}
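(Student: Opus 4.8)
The plan is to run an iterative partial-colouring scheme and, at each scale, to use the determinant lower bound to certify that the relevant feasible polytope captures enough Gaussian measure. After rescaling we may assume $\detlb(A)\le 1$, so the goal becomes to produce, in polynomial time, $\bm{x}\in\{\pm 1\}^n$ with $\|A\bm{x}\|_\infty = O(\sqrt{\log m\cdot\log n})$; the corollary is then immediate, since restricting to a subset of the columns only decreases $\detlb$.

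For the outer loop I would use the standard partial-colouring recursion (Beck--Fiala and Lov\'asz--Spencer--Vesztergombi, in algorithmic form Lovett--Meka, Rothvoss and later works): maintain an active set $S_i\subseteq[n]$ with $n_i:=|S_i|\le n/2^{i-1}$, find $\bm{y}\in[-1,1]^{S_i}$ with at least $n_i/2$ coordinates in $\{\pm 1\}$ and $\|A_{S_i}\bm{y}\|_\infty\le\delta_i$, freeze the integral coordinates, and recurse on the at most $n_i/2$ fractional ones. By Rothvoss's partial-colouring lemma such a $\bm{y}$ exists and can be found efficiently as soon as the symmetric convex body $K_i:=\{\bm{x}\in\mathbb{R}^{S_i}:\|A_{S_i}\bm{x}\|_\infty\le\delta_i\}$ satisfies $\gamma_{n_i}(K_i)\ge e^{-n_i/C}$ for an absolute constant $C$. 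Since $A_{S_i}$ inherits $\detlb(A_{S_i})\le 1$, everything reduces to a single question: for a matrix $M\in\mathbb{R}^{m'\times n'}$ with $\detlb(M)\le 1$, how small may $\delta$ be so that $\gamma_{n'}\bigl(\{\bm{x}:\|M\bm{x}\|_\infty\le\delta\}\bigr)\ge e^{-n'/C}$?

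This Gaussian-measure estimate is the technical heart and the step I expect to be the main obstacle. Writing $\bm{z}=M\bm{g}$ for $\bm{g}$ a standard Gaussian, one needs a lower bound on $\Pr[\|\bm{z}\|_\infty\le\delta]$ in terms of $\detlb(M)$ that does \emph{not} pass through a union bound (or \v{S}id\'ak's inequality) over the $m'$ rows, as that would cost the row norms, which the determinant bound does not control---already the prefix-sum matrix has a row of norm $\sqrt{n'}$ while $\detlb=1$. Instead I would localise the Gaussian density to its concentration ellipsoid $\{\bm{z}:\bm{z}^\top\Sigma^+\bm{z}\le O(n')\}$, where $\Sigma=MM^\top$, and lower-bound the fraction of that ellipsoid sitting inside the slab body $\delta B_\infty^{m'}$. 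The ellipsoid's volume is governed by $\det(M^\top M)=\sum_S\det(M_S)^2$ over $n'\times n'$ submatrices $M_S$ (Cauchy--Binet), hence by $\detlb(M)$; comparing it against central sections of the cube (Vaaler-type inequalities) and exploiting that $\detlb$ bounds \emph{all} subdeterminants---not merely this global Gram determinant---should give the capture fraction $e^{-O(n')}$ already at $\delta=O(\sqrt{\log m'})$ times a ``local'' determinant ratio of the nested submatrices. Extracting the correct power of the logarithm at this point, rather than the additional $\log(mn)$ factor incurred by naively chaining ``$\detlb\to$ volume $\to$ Gaussian measure'' as in Matou\v{s}ek's proof, is the delicate part.

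It remains to sum over the $O(\log n)$ scales. The per-scale budgets $\delta_i$ will be $O(\sqrt{\log m})$ times local determinant ratios of the nested active submatrices, and these ratios must be arranged---by choosing the active sets $S_i$ carefully---so that their suitably weighted product telescopes against $\detlb(A)\le 1$, forcing $\sum_i\delta_i=O(\sqrt{\log m\cdot\log n})$ rather than the trivial $O(\sqrt{\log m}\cdot\log n)$. I expect the two places where the argument must be genuinely tight to be (i) the volumetric lower bound above and (ii) this cross-scale bookkeeping: the construction of P\'alv\"olgyi and the three-permutations counterexample show that once $m=\mathrm{poly}(n)$ there is no slack to give away in either.
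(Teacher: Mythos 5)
Your proposal takes a genuinely different route from the paper, but it has two gaps that you yourself flag, and both are serious enough that filling them would require the main technical content of the paper's actual proof.

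The paper does not go through Lovett--Meka/Rothvoss partial colouring and Gaussian measure at all. Instead it defines a \emph{partial vector discrepancy} SDP: vectors $\bm{v}_j$ with $\sum_j \|\bm{v}_j\|_2^2 \ge |S|/2$ (not $\|\bm{v}_j\|=1$ for all $j$), $\|\bm{v}_j\|_2 \le 1$, and $\|\sum_j a_{ij}\bm{v}_j\|_2 \le \lambda$. It then (i) bounds the hereditary SDP value $\hpvdisc(A)$ by $O(\detlb(A))$ using SDP duality plus Cauchy--Binet, with no extra logarithmic loss; and (ii) rounds via a Bansal-style random walk, where the discrepancy of each row is a bounded-increment martingale with per-step variance $\lambda^2 s^2$ over $T = O(\log n / s^2)$ steps, so Freedman's inequality gives $O(\sqrt{\log m \cdot \log n}\cdot\lambda)$. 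The crucial point is that the $\sqrt{\log n}$ factor comes from variance accumulating \emph{additively} across the $O(\log n)$ halving epochs of one continuous martingale.

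Your first gap is the Gaussian-measure lower bound for the slab body; you correctly note it is the heart of the matter and leave it as an expectation that Cauchy--Binet plus Vaaler-type section bounds ``should give'' the estimate. That is not a proof, and it is not obvious it is even provable in the clean form you want---the example you cite (prefix sums, $\detlb=\Theta(1)$) already shows that for a single scale one should only expect $\delta=O(1)$, not $O(\sqrt{\log m'})$, to capture $e^{-\Omega(n')}$ Gaussian mass. Your second gap is more structural: even granting $\delta_i = O(\sqrt{\log m})$ per scale, partial colourings compose by the triangle inequality, $\|A\sum_i \bm{y}_i\|_\infty \le \sum_i \delta_i = O(\sqrt{\log m}\cdot\log n)$, which is Matou\v{s}ek's bound. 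Getting $\sqrt{\log m \cdot \log n}$ requires the per-scale errors to add in $\ell_2$ rather than $\ell_1$, which means a martingale/variance argument across scales---exactly the Bansal random walk plus Freedman's inequality that the paper uses and that your ``telescoping determinant ratios'' does not substitute for. As stated, your plan would prove the corollary with an extra $\sqrt{\log n}$ factor at best.
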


In light of the examples in \cite{p10,nnn12} where $\herdisc(A) \geq \Omega(\log n) \cdot \detlb(A)$, Theorem~\ref{thm:main} is tight up to constants whenever $m = \poly(n)$.
For the case where $m \gg \poly(n)$, one cannot hope to improve the $\sqrt{\log (m)}$ dependence on $m$ in Theorem~\ref{thm:main}.
In particular, the set system $\mathcal{F} = 2^{[n]}$ has $\herdisc(\mathcal{F}) = n$, $\detlb(\mathcal{F}) = \sqrt{n}$ and therefore $\herdisc(\mathcal{F}) \geq \sqrt{\log (m)} \cdot \detlb(\mathcal{F})$.  
It remains an open problem, however, whether one can improve the $\sqrt{\log n}$ factor in the later regime.

\paragraph{Hereditary discrepancy of union of set systems.} A question of V. S\'os (see~\cite{lsv86}) asks whether $\herdisc(\mathcal{F}_1 \cup \mathcal{F}_2)$ can be estimated in terms of $\herdisc(\mathcal{F}_1)$ and $\herdisc(\mathcal{F}_2)$, for any set systems $\mathcal{F}_1$ and $\mathcal{F}_2$ over $[n]$. 
This is, however, not possible without any dependence on $m = |\mathcal{F}_1 \cup \mathcal{F}_2|$ or $n$, as first shown by an example of Hoffman (Proposition 4.11 in \cite{m09}). This can also be seen from the examples in~\cite{p10,nnn12}. 
In \cite{kmv05}, it was shown that $\herdisc(\mathcal{F}_1 \cup \mathcal{F}_2) \leq O(\log (n)) \cdot \herdisc(\mathcal{F
}_1)$ when $\mathcal{F}_2$ contains a single set. 
For more general set systems, Matous\v{e}k~\cite{m13} proved that $\herdisc(\mathcal{F}) \leq O(\sqrt{t} \log(mn) \sqrt{\log(n)}) \cdot \max_{i \in [t]} (\herdisc(\mathcal{F}_i))$, where $\mathcal{F} = \mathcal{F}_1 \cup \cdots \cup \mathcal{F}_t$ and $m = |\mathcal{F}|$. 

Theorem~\ref{thm:main} together with Lemma 4 in~\cite{m13} immediately imply the following improvement of this result, whose proof is the same as in \cite{m13}. 
For $t=2$ and $m = \poly(n)$, this bound is tight up to constants. 

\begin{theorem} \label{thm:union}
Let $\mathcal{F}$ be a system of $m$ sets on $[n]$ such that $\mathcal{F} = \mathcal{F}_1 \cup \mathcal{F}_2 \cup \cdots \cup \mathcal{F}_t$. Then,
\begin{align*}
\herdisc(\mathcal{F}) \leq O\left( \sqrt{t \log (m) \log (n)} \right) \cdot \max_{i \in [t]} (\herdisc(\mathcal{F}_i)). 
\end{align*}
\end{theorem}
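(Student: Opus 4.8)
The plan is to reduce \Cref{thm:union} directly to \Cref{thm:main} via a standard covering/decomposition argument whose combinatorial core is exactly Lemma~4 of~\cite{m13}. Recall that Lemma~4 of~\cite{m13} relates the determinant lower bound of a union of set systems to the individual hereditary discrepancies: concretely, it shows that for $\mathcal{F} = \mathcal{F}_1 \cup \cdots \cup \mathcal{F}_t$ one has $\detlb(A_{\mathcal{F}}) \leq O(\sqrt{t}) \cdot \max_{i \in [t]} \herdisc(\mathcal{F}_i)$. The point is that $\detlb$, being defined through determinants of square submatrices, interacts nicely with partitioning the rows of $A_{\mathcal{F}}$ into the $t$ groups coming from $\mathcal{F}_1, \dots, \mathcal{F}_t$: a $k \times k$ submatrix splits its rows among the groups, and a Laplace-expansion / subadditivity-of-$\log|\det|$ type estimate together with the determinant lower bound $\herdisc(\mathcal{F}_i) \geq \frac12 \detlb(\mathcal{F}_i)$ applied to each block yields the $\sqrt{t}$ loss.

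With that lemma in hand, the proof is immediate. First I would apply \Cref{thm:main} to the $m \times n$ incidence matrix $A = A_{\mathcal{F}}$, obtaining
\begin{align*}
\herdisc(\mathcal{F}) = \herdisc(A) \leq O\!\left( \sqrt{\log(m) \log(n)} \right) \cdot \detlb(A).
\end{align*}
(Strictly speaking \Cref{thm:main} bounds $\disc$, but the Corollary extends it to $\herdisc$ by restricting to an arbitrary column subset, and $\detlb$ only decreases under such restriction, so the same bound holds for every $A_S$.) Then I would substitute the bound $\detlb(A) \leq O(\sqrt{t}) \cdot \max_{i} \herdisc(\mathcal{F}_i)$ from Lemma~4 of~\cite{m13}, and combining the two displays gives exactly
\begin{align*}
\herdisc(\mathcal{F}) \leq O\!\left( \sqrt{t \log(m) \log(n)} \right) \cdot \max_{i \in [t]} \herdisc(\mathcal{F}_i),
\end{align*}
as claimed.

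There is essentially no obstacle here beyond bookkeeping, since both ingredients are already available: \Cref{thm:main} is the main theorem of this paper, and the covering lemma is quoted verbatim from~\cite{m13}. The only things to be careful about are (i) making sure the $\herdisc$ (rather than $\disc$) version of \Cref{thm:main} is used, which is exactly the Corollary stated above, and (ii) confirming that the constant in Lemma~4 of~\cite{m13} indeed gives a $\sqrt{t}$ (and not, say, a $t$) dependence, so that the final exponent of $t$ is $1/2$ as stated; this is precisely why the improvement over Matou\v{s}ek's $O(\sqrt{t}\log(mn)\sqrt{\log n})$ bound is only in the $\log$ factors and not in the $t$-dependence. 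Finally, for the tightness remark when $t = 2$ and $m = \poly(n)$, one invokes the constructions of~\cite{p10} or~\cite{nnn12}, each of which exhibits two set systems whose union has hereditary discrepancy $\Omega(\log n)$ times the maximum of the individual ones; we do not reprove this.
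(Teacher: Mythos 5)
Your proof is exactly the paper's: the paper states that Theorem~\ref{thm:main} (via the Corollary, which gives $\herdisc(A) \leq O(\sqrt{\log(m)\log(n)}) \cdot \detlb(A)$) combined with Lemma~4 of \cite{m13} (which bounds $\detlb$ of a union by $O(\sqrt{t})$ times the maximum individual $\herdisc$) immediately yields Theorem~\ref{thm:union}, and your two-step composition with the accompanying bookkeeping is precisely that chain.
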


\paragraph{Approximating hereditary discrepancy.}
It was shown in~\cite{cnn11} that $\disc(A)$ cannot be approximated in polynomial time for an arbitrary matrix $A \in \{0,1\}^{m \times n}$. 
The more robust notion of hereditary discrepancy, however, can be approximated within a polylog factor.
The best-known result in this direction is a $O(\log(\min(m,n)) \cdot \sqrt{\log (m)})$-approximation to hereditary discrepancy via the $\gamma_2$-norm~\cite{mnt15}. 
When $ m = \poly(n)$, this approximation factor is $O(\log^{3/2} (n))$. 

Our result in Theorem~\ref{thm:main} suggests a potential approach of approximating hereditary discrepancy by approximating the determinant lower bound. 
There has been a recent line of work in approximating the maximum $k \times k$ subdeterminant for a given matrix $A$. For $k = \min(m,n)$, Nikolov~\cite{n15} gave a $2^{O(k)}$-approximation; for general values of $k$, Anari and Vuong~\cite{av20} showed a $k^{O(k)}$-approximation algorithm. 
If these results can be strengthened to a $2^{O(k)}$-approximation algorithm for general values of $k$, then together with Theorem~\ref{thm:main}, one would obtain the first $O(\log(n))$-approximation algorithm for hereditary discrepancy when $m = \poly(n)$.

\paragraph{Overview of proof of Theorem~\ref{thm:main}.}
We follow the approaches in \cite{b10} and \cite{m13}. 
The key notion to prove Theorem~\ref{thm:main} is that of {\em hereditary partial vector discrepancy}, which is defined as follows.  
Given a matrix $A \in \setR^{m \times n}$ with entries $a_{ij}$ for $i \in [m]$ and $j \in [n]$,
we consider the following SDP for a subset $S \subseteq [n]$ and a parameter $\lambda \geq 0$: 

\begin{equation*}\tag*{$\SDP(A, S, \lambda)$}
\begin{aligned}
 &\Big\|\sum_{j \in S} a_{ij} \bm{v}_j \Big\|_2^2 \le \lambda^2 \quad \forall i \in [m],\\
&\sum_{j=1}^n \|\bm{v}_j\|_2^2 \ge |S|/2, \\
 &\|\bm{v}_j\|_2^2 \leq 1 \quad \forall j \in S, \\
 &\|\bm{v}_j\|_2^2 = 0 \quad \forall j \in [n] \setminus S .
\end{aligned}
\end{equation*}

Define the {\em partial vector discrepancy} of $A$, denoted as $\pvdisc(A)$, to be the smallest value of $\lambda$ such that $\SDP(A, [n], \lambda)$ is feasible, and {\em hereditary partial vector discrepancy} $\hpvdisc(A)$ to be the smallest $\lambda$ such that $\SDP(A, S, \lambda)$ is feasible for any subset $S \subseteq [n]$. 

Using the above definition, we show in Lemma~\ref{lem:bansal} of \Cref{subsec:bansal} that the above SDP can be rounded efficiently to obtain a coloring with discrepancy at most $O(\sqrt{\log (m) \log (n)} \cdot \hpvdisc(A))$. We then prove in Lemma~\ref{lem:detlb_leq_hpvdisc} of \Cref{subsec:matousek} that $\hpvdisc(A) \le O(\detlb(A))$, from which Theorem~\ref{thm:main} immediately follows. 
We conjecture that $\hpvdisc(A)$ is the same as $\detlb(A)$ up to constants (Conjecture \ref{conj:detlb_hpvdisc}).

\paragraph{Notations and preliminaries.}
Given a matrix $A \in \setR^{m \times n}$, its rows will be denoted by $\bm{a}_1, \dots, \bm{a}_m \in \setR^n$. Define $A_{S,T}$ to be the matrix with rows restricted to some subset $S \subseteq [m]$ and columns restricted to some $T \subseteq [n]$, and $A_{S} := A_{[m],S}$.

\begin{theorem}[Freedman's Inequality, Theorem 1.6 in~\cite{f75}]
\label{thm:freedman}
Consider a real-valued martingale sequence $\{X_t\}_{t\geq 0}$ such that $X_0=0$, and $\E[X_{t+1}|\mathcal{F}_t]=0$ for all $t$, where $\{\mathcal{F}_t\}_{t\geq 0}$ is the filtration defined by the martingale. Assume that the sequence is uniformly bounded, i.e., $|X_t|\leq M$ almost surely for all $t$. Now define the predictable quadratic variation process of the martingale to be $W_t=\sum_{j=1}^t \E[X_j^2|\mathcal{F}_{j-1}]$ for all $t\geq 1$. Then for all $\ell \geq 0$ and $\sigma^2>0$ and any stopping time $\tau$, we have
\[
\p\Big[ \Big|\sum_{j=0}^\tau X_j \Big|\geq \ell \wedge W_\tau \leq \sigma^2 \text{for some stopping time } \tau \Big] \leq 2\exp\Big(- \frac{\ell^2/2}{\sigma^2+M \ell/3} \Big).
\]
\end{theorem}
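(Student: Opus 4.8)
The plan is to run the classical exponential-supermartingale argument (Bernstein's trick, in the martingale form of Freedman), together with the extra bookkeeping needed to handle the joint event $\{|\sum_{j=0}^\tau X_j| \ge \ell\} \cap \{W_\tau \le \sigma^2\}$. First I would isolate a one-step estimate. The function $\psi(u) := (e^u - 1 - u)/u^2$, extended by $\psi(0) = \tfrac12$, is nondecreasing on all of $\setR$ (one checks $\psi'(u) = ((u-2)e^u + u + 2)/u^3$, whose numerator is nonnegative for $u \ge 0$ and nonpositive for $u \le 0$, matching the sign of $u^3$), and for $0 \le u < 3$ a term-by-term comparison of power series gives $\psi(u) \le \tfrac12 (1 - u/3)^{-1}$. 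Hence for $\theta \in (0, 3/M)$ and any $x$ with $|x| \le M$, monotonicity yields $e^{\theta x} = 1 + \theta x + \psi(\theta x)(\theta x)^2 \le 1 + \theta x + \psi(\theta M)\,\theta^2 x^2$; taking conditional expectations and using $\E[X_j \mid \F_{j-1}] = 0$ together with $1 + z \le e^z$,
\[
\E\!\left[ e^{\theta X_j} \,\middle|\, \F_{j-1} \right] \;\le\; \exp\!\left( c(\theta)\, \E[X_j^2 \mid \F_{j-1}] \right), \qquad c(\theta) := \frac{\theta^2}{2(1 - \theta M/3)}.
\]

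Next, for fixed $\theta \in (0, 3/M)$, I would consider $Z_t := \exp\!\big(\theta \sum_{j=0}^t X_j - c(\theta)\, W_t\big)$, noting $Z_0 = 1$ since $X_0 = 0$ and $W_0 = 0$. Because $W_t - W_{t-1} = \E[X_t^2 \mid \F_{t-1}]$, the one-step estimate gives $\E[Z_t \mid \F_{t-1}] \le Z_{t-1}$, so $(Z_t)$ is a nonnegative supermartingale of mean at most $1$. The constraint on $W_\tau$ is incorporated by a stopping trick: since $W_{t+1}$ is $\F_t$-measurable, $\rho := \inf\{t \ge 0 : W_{t+1} > \sigma^2\}$ is a stopping time, the stopped process $(Z_{t \wedge \rho})$ is again a nonnegative supermartingale of mean at most $1$, and it satisfies $W_{t \wedge \rho} \le \sigma^2$ identically. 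By Ville's maximal inequality, $\p\big[\sup_{t \ge 0} Z_{t \wedge \rho} \ge a\big] \le 1/a$ for every $a > 0$.

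Now, on any sample path for which some stopping time $\tau$ has $\sum_{j=0}^\tau X_j \ge \ell$ and $W_\tau \le \sigma^2$, monotonicity of $(W_t)$ forces $\tau \le \rho$, hence $Z_{\tau \wedge \rho} = Z_\tau = \exp\big(\theta \sum_{j=0}^\tau X_j - c(\theta) W_\tau\big) \ge \exp(\theta\ell - c(\theta)\sigma^2)$; so this event lies in $\{\sup_t Z_{t \wedge \rho} \ge \exp(\theta\ell - c(\theta)\sigma^2)\}$ and has probability at most $\exp(-\theta\ell + c(\theta)\sigma^2)$. Choosing $\theta := \ell/(\sigma^2 + M\ell/3) \in (0, 3/M)$ gives $1 - \theta M/3 = \sigma^2/(\sigma^2 + M\ell/3)$, whence $c(\theta)\sigma^2 = \theta^2(\sigma^2 + M\ell/3)/2 = \theta\ell/2$ and the bound becomes $\exp(-\theta\ell/2) = \exp\!\big(-\tfrac{\ell^2/2}{\sigma^2 + M\ell/3}\big)$. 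Running the identical argument for the martingale $(-X_t)_{t\ge 0}$, which has the same predictable quadratic variation $(W_t)$, bounds $\p\big[-\sum_{j=0}^\tau X_j \ge \ell,\ W_\tau \le \sigma^2 \text{ for some } \tau\big]$ by the same quantity, and a union bound over the two signs gives the factor $2$ and the absolute value in the statement.

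The step I expect to be the main obstacle is the third one: the event to be bounded is not literally of the form $\{\sup_t Z_t \ge a\}$ because of the side condition $W_\tau \le \sigma^2$, and the remedy — stopping the exponential supermartingale at the first time its (predictable) quadratic variation would exceed $\sigma^2$, which is exactly what converts the side condition into "$\tau \le \rho$" and hence into the stopped process — is the one place where predictability of $(W_t)$ is essential. Everything else is the routine Bernstein optimization over $\theta$ together with the elementary analytic facts about $\psi$ (its monotonicity and the $\tfrac12(1-u/3)^{-1}$ majorant), which I would relegate to a short computation.
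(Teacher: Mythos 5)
The paper does not prove this theorem: it is quoted directly from Freedman's 1975 paper (cited as Theorem 1.6 of \cite{f75}) and used as a black box in the proof of Lemma~\ref{lem:bansal}. So there is no in-paper proof to compare against.

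Your argument is nevertheless correct and is the standard proof of Freedman's inequality. The analytic lemma on $\psi(u)=(e^u-1-u)/u^2$ (monotonicity plus the majorant $\tfrac12(1-u/3)^{-1}$ on $[0,3)$) is checked correctly; the exponential process $Z_t=\exp(\theta\sum_{j\le t}X_j-c(\theta)W_t)$ is a nonnegative supermartingale precisely because $W_t-W_{t-1}=\E[X_t^2\mid\F_{t-1}]$ is $\F_{t-1}$-measurable; stopping at $\rho=\inf\{t:W_{t+1}>\sigma^2\}$ is legitimate because $W_{t+1}$ is predictable, and it correctly converts the side condition $W_\tau\le\sigma^2$ into $\tau\le\rho$ by monotonicity of $W$; Ville's inequality, the choice $\theta=\ell/(\sigma^2+M\ell/3)$, and the two-sided union bound then give exactly the stated tail. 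One small remark: the paper's hypotheses $\E[X_{t+1}\mid\F_t]=0$ and the definition of $W_t$ make clear that $\{X_t\}$ is a \emph{martingale difference} sequence rather than a martingale, and you read it that way, which is the intended meaning.
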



\section{Proof of Theorem~\ref{thm:main}}

\subsection{The Algorithm} 
\label{subsec:bansal}

The main result of this subsection is the following lemma.

\begin{lemma} \label{lem:bansal}
Given a matrix $A \in \mathbb{R}^{m \times n}$, there exists a randomized algorithm that w.h.p. constructs a coloring $\bm{x} \in \{+1,-1\}^n$ such that $\| A \bm{x} \|_\infty \leq O(\sqrt{\log (m) \log (n)} \cdot \hpvdisc(A))$. This implies that 
$\herdisc(A) \le O(\sqrt{\log(m) \log (n)} \cdot \hpvdisc(A))$. 
\end{lemma}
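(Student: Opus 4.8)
The plan is to realize the iterated-SDP random walk of Bansal~\cite{b10} and Matou\v{s}ek~\cite{m13}, exploiting crucially that $\hpvdisc$ is \emph{hereditary}: setting $\lambda := \hpvdisc(A)$, the program $\SDP(A, S, \lambda)$ is feasible for \emph{every} column subset $S$. The algorithm maintains a fractional point $\bm x \in [-1,1]^n$, initialized at the origin, and the ``alive'' set $S = \{\, j : |x_j| < 1 \,\}$, initialized to $[n]$; fix a step size $\gamma = 1/\poly(mn)$. In each step it solves $\SDP(A, S, \lambda)$ to obtain vectors $\bm v_1, \dots, \bm v_n$ with $\bm v_j = 0$ for $j \notin S$, draws a fresh standard Gaussian $\bm g$, and updates $\bm x \leftarrow \bm x + \gamma \sum_j \langle \bm g, \bm v_j\rangle \bm e_j$, clamping to $\pm 1$ (and removing from $S$) any coordinate that would leave $[-1,1]$. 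It iterates until $S = \emptyset$ and outputs the resulting $\bm x \in \{-1,1\}^n$. Since $\gamma$ is polynomially small, each clamping event perturbs the true fractional walk by only $O(\gamma\sqrt{\log(mn)})$, which will be negligible below, and there are $\poly(mn)$ steps, each a convex program, so the algorithm is efficient.

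For the discrepancy guarantee, the plan is to apply Freedman's inequality (Theorem~\ref{thm:freedman}) to a \emph{single} martingale over the \emph{entire} run — doing a separate concentration bound and row union bound per ``round'' would cost an extra $\sqrt{\log n}$ and merely reprove Matou\v{s}ek's weaker estimate. Fix a row $i$ and let $Y_t := \langle \bm a_i, \bm x_t\rangle$; ignoring clamping, the increment of $Y$ at a step using vectors $\{\bm v_j\}$ is the mean-zero Gaussian $\gamma \langle \bm g, \sum_j a_{ij}\bm v_j\rangle$, whose conditional variance is $\gamma^2 \| \sum_j a_{ij}\bm v_j \|_2^2 \le \gamma^2 \lambda^2$ by the first constraint of the SDP, so the predictable quadratic variation of $Y$ after $t$ steps is at most $\gamma^2\lambda^2 t$. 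Truncating each Gaussian increment at magnitude $M = \Theta(\gamma\lambda\sqrt{\log(mn)})$ — which by a Gaussian tail bound changes nothing over the whole run with probability $1 - 1/\poly(mn)$ and keeps $Y$ a martingale by symmetry — we invoke Theorem~\ref{thm:freedman} with the stopping time $\tau = T_{\mathrm{tot}}$ (the first step at which $S = \emptyset$), $\sigma^2 = \Theta(\lambda^2 \log n)$, this $M$, and $\ell = \Theta(\lambda\sqrt{\log(m)\log(n)})$: the event $W_\tau \le \sigma^2$ holds whenever $T_{\mathrm{tot}} = O(\log(n)/\gamma^2)$, and on that event $\Pr[|Y_\tau| \ge \ell] \le 2\exp(-\Omega(\ell^2/\sigma^2)) = 2\exp(-\Omega(\log m)) \le 1/m^2$ once $\gamma$ is small enough that $M\ell/3 \le \sigma^2$. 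Union-bounding over the $m$ rows and adding back the total $O(\gamma\,\polylog(mn))\cdot\max_{ij}|a_{ij}| = O(\lambda)$ perturbation from clamping (using $\hpvdisc(A) \ge \max_{ij}|a_{ij}|/\sqrt 2$, which follows from taking $S = \{j\}$ in the SDP) then gives $\|A\bm x\|_\infty = O(\lambda\sqrt{\log(m)\log(n)})$ — \emph{provided} the walk terminates within $O(\log(n)/\gamma^2)$ steps with high probability.

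Thus the remaining task, and the technical heart of the proof, is to establish that running-time bound $T_{\mathrm{tot}} = O(\log(n)/\gamma^2)$ with high probability. The mechanism is an energy argument: each alive coordinate $x_j$ is a martingale with per-step conditional variance $\gamma^2\|\bm v_j\|_2^2$, so its total quadratic variation until it freezes is at most $1$; summing over $j$ and using the second SDP constraint $\sum_j \|\bm v_j\|_2^2 \ge |S|/2$ (valid at every step since the SDP is re-solved on the current $S$) shows that in any window of $T = O(1/\gamma^2)$ steps the expected size of the alive set drops by a constant factor, hence to $O(1)$ after $O(\log n)$ windows, after which the last few coordinates freeze quickly. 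I expect the delicate part to be upgrading this expectation statement to one holding with high probability — so that the union bound over the $\Theta(\log n)$ windows, and hence the bound $W_\tau \le \sigma^2$ used above, both go through — which requires controlling the number of frozen coordinates in a window via its own concentration/potential argument (e.g.\ on a barrier function such as $\sum_j \log(1/(1-x_j^2))$), and this interacts tightly with the constraint that windows remain only $O(1/\gamma^2)$ steps long, since lengthening them would inflate $W_\tau$ past $O(\lambda^2 \log n)$. Finally, the hereditary statement is immediate: $\hpvdisc(A_S) \le \hpvdisc(A)$ because $\SDP(A_S, S', \lambda)$ and $\SDP(A, S', \lambda)$ impose identical constraints for $S' \subseteq S$, so applying the coloring bound to $A_S$ for each $S \subseteq [n]$ and taking the maximum over $S$ yields $\herdisc(A) \le O(\sqrt{\log(m)\log(n)} \cdot \hpvdisc(A))$.
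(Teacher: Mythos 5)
Your proposal is essentially correct and follows the same approach as the paper: a Bansal-style random walk driven by $\SDP(A, S, \lambda)$ re-solved on the current alive set, a single Freedman martingale bound per row over the entire $O(\log(n)/\gamma^2)$-step run (avoiding a per-round union bound, which is exactly the source of the improvement over Matou\v{s}ek), and an energy/quadratic-variation argument for termination. The only material differences are cosmetic: the paper uses Rademacher increments (so Freedman applies directly without a truncation step) and simply cites Lemma~4.1 of~\cite{b10} for the fact that the alive set halves each $O(1/\gamma^2)$-step window with probability $\ge 3/4$, iterating with a Chernoff bound over $O(\log n)$ windows — which is what the ``delicate part'' you flag reduces to; no barrier function is needed.
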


The algorithm in Lemma~\ref{lem:bansal} is given in Algorithm~\ref{alg:rounding}. This algorithm is a variant of the random walk in \cite{b10}, using the SDP for hereditary partial vector discrepancy.

\begin{algorithm}[htp!]\caption{$\textsc{HerpvdiscRounding}(A)$}\label{alg:rounding}
\begin{algorithmic}[1]
\State $\lambda \leftarrow \hpvdisc(A)$ \Comment{The value of $\lambda$ can be approximated with a binary search}
\State $\bm{x}_0 \leftarrow \bm{0} \in \mathbb{R}^n$, $S_0 \leftarrow [n]$, $s \leftarrow 1/m^2 n^2$, $T \leftarrow 200 \log (n)/ s^2$
\For{$t = 1,2, \cdots, T$}
 	\State $\bm{v}_1, \cdots, \bm{v}_n \leftarrow \SDP(A, S_{t-1}, \lambda)$
 	\State Sample $\bm{r} \in \{-1,+1\}^n$ uniformly at random 
 	\For{$i \in [n]$}
 	 $x_t(i) \leftarrow x_{t-1}(i) + s \cdot \langle \bm{r}, \bm{v}_i \rangle$ 
 	\EndFor
 	\State $S_t \leftarrow S_{t-1}$
 	\For{$i \in [n] \setminus S_{t-1}$}
 		\If{$|x_t(i)| \geq 1 - 1/n$}
 			\State $S_t \leftarrow S_t \setminus \{i\}$
 		\EndIf
 	\EndFor
\EndFor
\State Round $\bm{x}_T$ to a vector $\bm{x} \in \{-1,+1\}^n$
\State \textbf{Return} $\bm{x}$
\end{algorithmic}
\end{algorithm}

Since Lemma~\ref{lem:bansal} is invariant under rescaling of the matrix $A$, we may assume without loss of generality that that $\max_{i,j}|a_{i,j}| = 1$. 
Given a coloring $\bm{x} \in [-1,1]^n$, we say an element $i \in [n]$ is alive if $|x(i)| < 1-1/n$. 
The following lemma from \cite{b10} states that the number of alive elements halves after $O(1/s^2)$ steps.

\begin{lemma}[Lemma 4.1 of \cite{b10}] \label{lem:half_colored}
Let $\bm{y} \in [-1,+1]^n$ be an arbitrary fractional coloring with at most $k$ alive variables. Let $\bm{z}$ be the fractional coloring obtained by running algorithm~\ref{alg:rounding} with $\bm{x}_0' = \bm{y}$ for $T' = 16/s^2$ steps. Then the probability that $\bm{z}$ has at least $k/2$  alive variables is at most $1/4$.
\end{lemma}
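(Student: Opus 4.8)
The plan is to track, for a single ``alive'' variable $i$, the evolution of $x_t(i)$ as a martingale and argue it escapes the interval $(-(1-1/n), 1-1/n)$ within $T' = 16/s^2$ steps with good probability; then take a union bound over the (at most $k$) alive variables to conclude that at most $k/2$ of them remain alive with probability at least $3/4$. First I would fix an alive variable $i$ and set $X_t := x_t(i) - x_{t-1}(i) = s \langle \bm r, \bm v_i \rangle$ at step $t$; this is a martingale difference since $\bm r$ is a fresh uniform $\pm 1$ vector independent of the SDP solution, so $\E[X_t \mid \F_{t-1}] = 0$. The key quantitative input is the SDP constraint $\sum_j \|\bm v_j\|_2^2 \ge |S_{t-1}|/2$ together with $\|\bm v_j\|_2^2 \le 1$: this forces the total ``energy'' injected into the alive coordinates to be a constant fraction of the number of alive variables, so \emph{on average} an alive coordinate $i$ sees $\E[X_t^2 \mid \F_{t-1}] = s^2 \|\bm v_i\|_2^2$ with $\|\bm v_i\|_2^2$ that cannot be too small for too many $i$ simultaneously.

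The cleanest route (this is how \cite{b10} does it) is a potential/counting argument rather than a per-variable one. I would define $\Phi_t := \sum_{i \text{ alive at } t} x_t(i)^2$ and show $\E[\Phi_t - \Phi_{t-1} \mid \F_{t-1}] = s^2 \sum_{i \text{ alive}} \|\bm v_i\|_2^2 \ge s^2 \cdot (\text{const}) \cdot (\#\text{alive})$, using that the quadratic term from $x_t(i)^2 = (x_{t-1}(i) + X_t)^2$ has expectation $x_{t-1}(i)^2 + \E[X_t^2\mid\F_{t-1}]$ and that the $\|\bm v_i\|^2$ for dead coordinates vanish by the last SDP constraint while the alive ones sum to $\Omega(\#\text{alive})$. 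Summing over $T' = 16/s^2$ steps, the expected increase of $\Phi$ is $\Omega(T' s^2 k) = \Omega(k)$ as long as $\ge k/2$ variables stay alive throughout; but $\Phi_{T'} \le \#\{\text{alive}\} \le k$ always, so the expected ``room'' is bounded, and a Markov-type argument on the stopping time (first time fewer than $k/2$ are alive) gives that with probability $\ge 3/4$ this stopping time is at most $T'$. One has to be slightly careful that $\Phi$ can only increase while a variable is alive and that freezing a variable at $|x_t(i)| \ge 1-1/n$ only helps; the constants ($16$, $1/4$) are chosen to absorb these and the $1/n$ slack.

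The main obstacle is the careful bookkeeping of the stopping-time argument: $\Phi_t$ is not itself a submartingale with a clean drift once variables start freezing, so one must restrict attention to the event that many variables are still alive and handle the boundary effect of a variable crossing $1-1/n$ (where its contribution to $\Phi$ jumps to roughly $1$ and then stays fixed). Concretely I would let $\tau$ be the first step at which fewer than $k/2$ variables are alive, run the drift computation for $t \le \tau$ only, and show $\E[\Phi_{\min(T',\tau)}] \ge \Omega(\min(T',\tau) \cdot s^2 k)$; combined with $\Phi \le k$ this forces $\p[\tau > T'] \le 1/4$ for the right constant. Since the statement is quoted verbatim from \cite{b10}, I would in fact just cite that proof, but the sketch above is the argument I would reconstruct if needed.
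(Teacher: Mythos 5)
The paper does not actually prove this lemma; it is quoted verbatim and cited as Lemma~4.1 of \cite{b10}, so there is no in-paper proof to compare against. Your reconstruction of Bansal's argument is essentially correct: the per-variable martingale idea in your first paragraph would not close on its own (an individual $\|\bm v_i\|$ may be tiny, so nothing forces a fixed coordinate to escape), and you rightly pivot to the aggregate potential $\Phi_t$. The bookkeeping is cleanest if you sum over a fixed index set (say the $k$ initially alive coordinates, since dead coordinates have $\|\bm v_i\| = 0$ and never move again) rather than the time-varying alive set, which avoids the artificial decrease of $\Phi$ when a coordinate is removed from the sum. With that fix the constants check out exactly as you sketch: on $\{\tau \ge t\}$ the drift is $\E[\Phi_t - \Phi_{t-1}\mid\F_{t-1}] = s^2\sum_i\|\bm v_i\|_2^2 \ge s^2|S_{t-1}|/2 \ge s^2 k/4$, so $\E[\Phi_{\min(T',\tau)}] \ge T' s^2 (k/4)\,\p[\tau > T'] = 4k\,\p[\tau>T']$, while $\Phi \le k$ up to a negligible overshoot of order $s\sqrt n$, forcing $\p[\tau>T'] \le 1/4$. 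Citing \cite{b10}, as the paper itself does and as you propose, is the appropriate thing to do here.
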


\begin{proof}[Proof of Lemma~\ref{lem:bansal}]
We first argue that after $T = 400 \log (n)/s^2$ steps, no element is alive with high probability. 
Divide the time horizon into epochs of size $16/s^2$. 
For each epoch, Lemma~\ref{lem:half_colored} states that regardless of the past, the number of alive elements decreases by at least half with probability at least $3/4$. 
It follows that no element is alive with high probability after $25 \log (n)$ epochs. 
Note that when no element is alive for the coloring $\bm{x}_T$, one can round it to a full coloring without changing the discrepancy of each set by more than $1$.

Next we prove that with high probability, the discrepancy of each row of $A$ is at most $O(\sqrt{\log (m) \log (n)}) \cdot \lambda$. 
We consider any $j \in [m]$, and denote $\disc_t(j) = \langle \bm{a}_j, \bm{x}_t \rangle$ the discrepancy of row $j$ at the end of time step $t \in [T]$. 
Note that $\E[\disc_t(j) - \disc_{t-1}(j) | \disc_{t-1}(j)] = 0$ and $\E[(\disc_t(j) - \disc_{t-1}(j))^2 | \disc_{t-1}(j)] \leq \lambda^2 s^2$. 
It follows from Freedman's inequality (Theorem~\ref{thm:freedman}) that
\begin{align*}
\p\left [|\disc_T(j)| \geq 10 \sqrt{\log (m) \log (n)} \cdot \lambda \right] \leq 1/m^2 .
\end{align*}
So by the union bound, the discrepancy of the obtained coloring is at most $O(\sqrt{\log(m) \log (n)} \cdot \hpvdisc(A))$ with high probability.
This completes the proof of Lemma~\ref{lem:bansal}. 
\end{proof}

\subsection{Bounding Partial Vector Discrepancy}
\label{subsec:matousek}

In this subsection, we prove the following lemma which upper bounds partial vector discrepancy in terms of the determinant lower bound. The proof can be seen as a simplification of Lemma 8 in~\cite{m13}, which gives a corresponding upper bound for {\em vector discrepancy} that is weaker by a factor of $\sqrt{\log n}$ due to a bucketing argument that is not needed here.

\begin{lemma} \label{lem:detlb_leq_hpvdisc}
For any $A \in \mathbb{R}^{m \times n}$, we have $\hpvdisc(A) \le O(\detlb(A)).$
\end{lemma}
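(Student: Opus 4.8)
## Proof proposal for Lemma~\ref{lem:detlb_leq_hpvdisc}

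The plan is to establish the apparently stronger bound $\pvdisc(A) = O(\detlb(A))$ for every matrix $A$, which suffices: for any $S \subseteq [n]$ every submatrix of $A_S$ is a submatrix of $A$, so $\detlb(A_S) \le \detlb(A)$, and $\SDP(A,S,\lambda)$ is precisely the partial vector discrepancy SDP of $A_S$ (the $\bm v_j$ with $j \notin S$ are pinned to $\bm 0$), hence $\hpvdisc(A) = \sup_{S \subseteq [n]} \pvdisc(A_S) \le O(\detlb(A))$. Fix $\lambda := C \cdot \detlb(A)$ for a large absolute constant $C$; I will show $\SDP(A,[n],\lambda)$ is feasible. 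Its feasibility is equivalent to
\[
 \max\Big\{\, \tr(X) \ :\ X \succeq 0,\quad \bm a_i^{\top} X \bm a_i \le \lambda^2 \ \ \forall i \in [m],\quad X_{jj} \le 1 \ \ \forall j \in [n] \,\Big\} \ \ge\ \tfrac{n}{2} ,
\]
(write $X_{jk} = \langle \bm v_j, \bm v_k\rangle$), and the maximum is attained because the feasible set is compact. By SDP strong duality, which applies since the primal is strictly feasible (e.g.\ $X = \varepsilon I_n$), if this maximum is below $n/2$ then there exist $y \in \R^m$ and $z \in \R^n$ with
\[
 M \ :=\ \sum_{i=1}^m y_i\, \bm a_i \bm a_i^{\top} \ +\ \diag(z) \ \succeq\ I_n , \qquad \lambda^2 \sum_i y_i \ +\ \sum_j z_j \ <\ \tfrac{n}{2} .
\]
I will contradict this once $C$ is large enough.

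Put $Y := \sum_i y_i$ and $Z := \sum_j z_j < n/2$. The set of ``deficient'' coordinates $Q := \{\, j \in [n] : z_j < 4/5 \,\}$ has size $q := |Q| > n/3$, since otherwise $Z \ge \tfrac45\, |[n] \setminus Q| \ge \tfrac45 \cdot \tfrac{2n}{3} > n/2$. Restricting $M \succeq I_n$ to the principal block on $Q$ and using $z_j < 4/5$ there,
\[
 N_Q \ :=\ \Big( \sum_{i=1}^m y_i\, \bm a_i \bm a_i^{\top} \Big)_{Q,Q} \ =\ A_{[m],Q}^{\top}\, \diag(y)\, A_{[m],Q} \ \succeq\ I_Q - \diag\big( z_j : j \in Q \big) \ \succeq\ \tfrac15\, I_Q ,
\]
so $\det(N_Q) \ge 5^{-q}$ (in particular $q \le m$). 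On the other hand, Cauchy--Binet applied to $N_Q = \big(\diag(y)^{1/2} A_{[m],Q}\big)^{\top}\big(\diag(y)^{1/2} A_{[m],Q}\big)$ gives
\[
 \det(N_Q) \ =\ \sum_{\substack{S \subseteq [m] \\ |S| = q}} \Big( \prod_{i \in S} y_i \Big)\, \det(A_{S,Q})^2 \ \le\ \detlb(A)^{2q} \sum_{\substack{S \subseteq [m] \\ |S| = q}} \prod_{i \in S} y_i \ \le\ \detlb(A)^{2q}\, \frac{Y^q}{q!} \ \le\ \Big( \frac{e\, \detlb(A)^2\, Y}{q} \Big)^{q} ,
\]
using $|\det(A_{S,Q})| \le \detlb(A)^q$ (as $A_{S,Q}$ is a $q \times q$ submatrix of $A$), then the elementary-symmetric bound $\sum_{|S|=q} \prod_{i\in S} y_i \le Y^q/q!$, then $q! \ge (q/e)^q$. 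Comparing the two displays yields $Y \ge q/\big(5e\, \detlb(A)^2\big) > n/\big(15e\, \detlb(A)^2\big)$.

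Plugging this into $\lambda^2 Y \le \lambda^2 Y + Z < n/2$ gives $\lambda^2 < \tfrac{15e}{2}\, \detlb(A)^2$, i.e.\ $\lambda < 5\, \detlb(A)$; so taking $C = 5$ makes the above certificate impossible, $\SDP(A,[n],\lambda)$ is feasible, $\pvdisc(A) \le 5\, \detlb(A)$, and the reduction in the first paragraph finishes the proof. The step I expect to be the crux is the linear-algebraic argument of the second paragraph: the naive route --- lower-bounding $\det(M) \ge 1$ and expanding over all of $[n]$ by Cauchy--Binet --- fails, because the diagonal contributions $\prod_j z_j$ need not be small and swamp the subdeterminants of $A$; one must first localize to the deficient coordinates $Q$ (where $z_j$ is bounded away from $1$) and then bound the \emph{determinant} --- not the trace --- of the rank-$\le m$ matrix $N_Q$, which is exactly the quantity Cauchy--Binet relates to the $q \times q$ subdeterminants of $A$. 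This localization is what the relaxed constraints $\|\bm v_j\|_2^2 \le 1$, $\sum_j \|\bm v_j\|_2^2 \ge |S|/2$ of the partial vector discrepancy SDP buy us; the corresponding step for ordinary vector discrepancy in \cite{m13} instead needs to dyadically bucket the dual variables, which costs the extra $\sqrt{\log n}$ factor we avoid here.
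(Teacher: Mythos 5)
Your proof is correct and takes essentially the same route as the paper: dualize the partial vector discrepancy SDP, restrict to the at least $n/3$ columns whose diagonal dual variable $z_j$ is bounded below $1$, lower-bound $\det$ of the restricted Gram matrix by the resulting eigenvalue bound, upper-bound it via Cauchy--Binet and an elementary symmetric polynomial estimate in terms of $\detlb(A)$, and combine. The only difference is bookkeeping: you fix $\lambda$ and dualize the max-trace feasibility SDP (so the PSD lower bound is normalized to $I_n$ and the extra dual scalar $\gamma$ of the paper disappears), whereas the paper normalizes $\sum_i w_i = 1$ and carries $\gamma$; both give the same bound.
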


\begin{proof}
Recall that $\pvdisc(A)^2$ is the optimal value of the SDP given by 
\begin{align*}
& \min \ t \\
& \Big\|\sum_{j=1}^n a_{ij} \bm{v}_j \Big\|_2^2 \le t \quad \forall i \in [m]\\
& \sum_{j=1}^n \|\bm{v}_j\|_2^2 \ge n/2 \\
& \|\bm{v}_j\|_2^2 \leq 1 \quad \forall j \in [n].
\end{align*} 

By denoting $X_{ij} := \langle \bm{v}_i, \bm{v}_j\rangle$, we may rewrite this SDP as follows:
\begin{align*}
& \min \ t \\
& \langle \bm{a}_i \bm{a}_i^\top, X \rangle \le t \quad \forall i \in [m]\\
& \langle I_n, X\rangle \ge n/2 \\
& \langle \bm{e}_j \bm{e}_j^\top, X\rangle \le 1 \quad \forall j \in [n] \\
& X \succeq 0,
\end{align*} 
where $\bm{e}_j$ denotes the vector with $1$ on the $j$-th coordinate and 0 elsewhere. The dual formulation of the above SDP is given by the following:
\begin{align*}
& \max \quad n \gamma - \sum_{j=1}^n z_j \\
 \quad &\sum_{i=1}^m w_i \bm{a}_i \bm{a}_i^\top + \sum_{j=1}^n z_j \bm{e}_j \bm{e}_j^\top \succeq 2 \gamma \cdot I_n \\
 &\sum_{i=1}^m w_i = 1 \\
 & \bm{w}, \bm{z} \ge 0.
\end{align*}

Denote $\lambda := \pvdisc(A)$. By Slater's condition, there exists a feasible dual solution $(\bm{w}, \bm{z}, \gamma)$ such that $\bm{w}, \bm{z} \geq 0$ and $n \gamma - \sum_{j=1}^n z_j = \lambda^2$. Indeed, the dual has a feasible interior point (for example, $w_i = 1/m, z_j = 1$ and $\gamma = 0$) and is bounded, since we may rewrite the first constraint as
\begin{align} \label{eq:constraint_1}
\sum_{i=1}^m w_i \bm{a}_i \bm{a}_i^\top \succeq \sum_{j=1}^n (2 \gamma - z_j) \cdot \bm{e}_j \bm{e}_j^\top,
\end{align}
 which implies 
\[
n \gamma - \sum_{j=1}^n z_j \le n \gamma - \frac{1}{2} \sum_{j=1}^n z_j \le \frac{1}{2} \tr\Big[\sum_{i=1}^m \bm{a}_i \bm{a}_i^\top\Big].
\]
Let $\tilde{A}$ be the matrix obtained from $A$ by multiplying the $i$-th row by $\sqrt{w_i}$ and $J \subseteq [n]$ be the set of columns for which $z_j < \frac{3}{2}\gamma$.
Note that $|J| \geq \frac{1}{3} n$, for otherwise $\sum_{j=1}^n z_j > \frac{2}{3} n \cdot \frac{3}{2} \gamma = n \gamma$.  
Since for each $j \in J$ we have $2\gamma - z_j \ge \frac{1}{2} \gamma$, for any vector $\bm{x} \in \mathbb{R}^J$ it follows by (\ref{eq:constraint_1}):
\begin{align*}
\bm{x}^\top \widetilde{A}_J^\top \widetilde{A}_J \bm{x} \geq \frac{1}{2}\gamma \cdot \| \bm{x} \|_2^2 \geq \frac{\lambda^2}{2n} \cdot \|\bm{x}\|_2^2 . 
\end{align*}
This implies that all eigenvalues of $\widetilde{A}_J^\top \widetilde{A}_J$ are at least $\lambda^2/2n$, so that $\det(\widetilde{A}_J^\top \widetilde{A}_J) \geq (\lambda^2/2n)^{|J|}$. In the other direction, the Cauchy-Binet formula also gives
\begin{align*}
\det(\widetilde{A}_J^\top \widetilde{A}_J) &= \sum_{\substack{I \subseteq [m] \\ |I| = |J|}} \det(\widetilde{A}_{I,J})^2 
= \sum_{\substack{I \subseteq [m] \\ |I| = |J|}} \det(A_{I,J})^2 \prod_{i\in I} w_i \\  
& \le \detlb(A)^{2|J|} \cdot  \sum_{\substack{I \subseteq [m] \\ |I| = |J|}} \prod_{i\in I} w_i 
 \le  \detlb(A)^{2|J|} \cdot \frac{1}{|J|!} \Big(\sum_{i=1}^m w_i\Big)^{|J|},
\end{align*}
where the last inequality follows as each term $ \prod_{i\in I} w_i$ appears $|J|!$ times in $\Big(\sum_{i=1}^m w_i\Big)^{|J|}$. Since $\sum_{i=1}^m w_i = 1$, we conclude 
\[
\detlb(A)^{2|J|} \cdot \frac{1}{|J|!} \ge \det(\widetilde{A}_J^\top \widetilde{A}_J) \geq (\lambda^2/2n)^{|J|},
\]
from which $\detlb(A) \ge \Omega(\lambda \cdot \sqrt{|J|/n}) = \Omega(\lambda) = \Omega(\pvdisc(A))$. Applying this result to all subsets $S \subseteq [n]$ of the columns of $A$ proves the lemma.
\end{proof} 

We conjecture that the above Lemma~\ref{lem:detlb_leq_hpvdisc} is tight up to constants. 

\begin{conjecture} \label{conj:detlb_hpvdisc}
For any matrix $A \in \mathbb{R}^{m \times n}$, we have $\detlb(A) = \Theta(\hpvdisc(A))$. 
\end{conjecture}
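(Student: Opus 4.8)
Since Lemma~\ref{lem:detlb_leq_hpvdisc} already gives the upper bound $\hpvdisc(A) \le O(\detlb(A))$, proving the conjecture amounts to establishing the reverse inequality $\hpvdisc(A) \ge \Omega(\detlb(A))$. The plan would be to fix $k$ and subsets $S_0 \subseteq [m]$, $T \subseteq [n]$ with $|S_0| = |T| = k$ achieving $|\det(A_{S_0,T})|^{1/k} = \detlb(A) =: D$, write $B := A_{S_0,T}$, and show that already the single restriction to the columns $T$ witnesses the lower bound, i.e. $\pvdisc(A_T) \ge \Omega(D)$, which suffices as $\hpvdisc(A) \ge \pvdisc(A_T)$. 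By strong duality --- established exactly as in the proof of Lemma~\ref{lem:detlb_leq_hpvdisc}, now with $n$ replaced by $k$ and $A$ by $A_T$ --- $\pvdisc(A_T)^2$ equals the optimum of the dual SDP: maximize $k\gamma - \sum_{j=1}^k z_j$ over distributions $\bm{w}$ on the rows, $\bm{z} \ge 0$, $\gamma \ge 0$, subject to $\sum_{i=1}^m w_i\, \bm{a}_i^{(T)}(\bm{a}_i^{(T)})^{\top} + \diag(\bm{z}) \succeq 2\gamma I_k$, where $\bm{a}_i^{(T)}$ denotes the $i$-th row of $A$ restricted to $T$. So it suffices to exhibit a row distribution $\bm{w}$ with $\lambda_{\min}\bigl(\sum_i w_i\, \bm{a}_i^{(T)}(\bm{a}_i^{(T)})^{\top}\bigr) \ge \Omega(D^2/k)$ (then take $\bm{z} = 0$ and $2\gamma$ equal to this eigenvalue); more generally one may spend part of the $\sum_j z_j$ budget to boost a few small eigenvalues.

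The natural first attempt is to concentrate the weight on the witnessing rows, $w_i = 1/k$ for $i \in S_0$ and $0$ otherwise, so that $M := \sum_i w_i\, \bm{a}_i^{(T)}(\bm{a}_i^{(T)})^{\top} = \tfrac1k B^{\top}B$ and hence $\pvdisc(A_T) \ge \sigma_{\min}(B)/\sqrt{2}$. When $B$ is well conditioned this already finishes the proof, since then $\sigma_{\min}(B) = \Theta\bigl((\prod_i \sigma_i(B))^{1/k}\bigr) = \Theta(|\det B|^{1/k}) = \Theta(D)$. In general, however, even exploiting that $B$ maximizes $|\det|^{1/\ell}$ over all submatrix sizes $\ell$, Cramer's rule only gives $\sigma_{\min}(B) \ge D/k$ --- every cofactor of $B$ is a $(k-1)\times(k-1)$ submatrix of $A$, hence at most $D^{k-1}$ in absolute value, so $\|B^{-1}\|_{\max} \le 1/D$ --- and this is a factor $k$ short of what is needed. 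Thus uniform weights do not suffice, and the whole content of the conjecture lies in showing that a cleverer choice does.

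To close this gap, the approach I would pursue is to precondition $B$ by a nonnegative diagonal \emph{row} reweighting $W$ with $\tr W = 1$ (which is precisely the freedom of choosing $\bm{w}$, possibly using rows outside $S_0$ and the Cauchy--Binet Gram matrix over a larger row set) so that $W^{1/2}B$ becomes $\Omega(1)$-well conditioned while the volume $\det(W^{1/2}B)^{1/k} = (\det W)^{1/2k}|\det B|^{1/k}$ stays $\Omega(D/\sqrt{k})$; then $\lambda_{\min}(M) = \Theta(D^2/k)$ and the dual value is $\Omega(D^2)$. This is a matrix-scaling / John-position type statement, and carrying it out with only an absolute-constant loss (independent of $m$, $n$, $k$) is, I expect, the main obstacle --- it is quite possible that the full conjecture requires a genuinely new idea here. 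If a few eigendirections of $M$ remain small after reweighting, one can attempt to boost them with the slack variables $z_j$, but this is cheap only when the offending eigenvectors are supported on a few coordinates; in general one would instead want to delete the corresponding columns and argue that $\detlb$ of the surviving submatrix is still $\Omega(D)$, which is itself not automatic. Finally, a complementary line worth exploring is to bypass $\herdisc$ entirely and relate $\hpvdisc$ and $\detlb$ through volumetric or $\gamma_2$-type characterizations: the route via Lemma~\ref{lem:bansal} together with the Lov\'asz--Spencer--Vesztergombi bound $\herdisc(A) \ge \tfrac12\detlb(A)$ only yields $\hpvdisc(A) \ge \Omega\bigl(\detlb(A)/\sqrt{\log m \log n}\bigr)$, and even removing one of the two logarithmic factors from this would already be of interest.
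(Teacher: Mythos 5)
The statement you are trying to prove is explicitly labeled a \emph{conjecture} in the paper, and the paper offers no proof of it; there is therefore nothing to compare your attempt against. To the paper's knowledge (and yours), the hard direction $\hpvdisc(A) \ge \Omega(\detlb(A))$ is open. Your write-up is honest about this, and the intermediate steps you do carry out are correct: restricting to the columns $T$ of the extremal $k\times k$ submatrix $B = A_{S_0,T}$ gives $\hpvdisc(A) \ge \pvdisc(A_T)$; the SDP dual of $\pvdisc(A_T)^2$ is as you state; putting uniform weight $1/k$ on the rows $S_0$ with $\bm{z}=0$ yields $\pvdisc(A_T) \ge \sigma_{\min}(B)/\sqrt{2}$; and the Cramer/adjugate bound $\|B^{-1}\|_{\max} \le 1/D$ (using that every $(k-1)\times(k-1)$ subdeterminant of $A$ is at most $D^{k-1}$) gives only $\sigma_{\min}(B) \ge D/k$, a factor $k$ too weak.

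The genuine gap, which you yourself flag, is the preconditioning step: you would need a row reweighting $W$ (possibly over all of $[m]$, not just $S_0$) such that $W^{1/2}A_T$ is $O(1)$-conditioned while $(\det W)^{1/2k}$ loses only a constant against its AM--GM optimum $1/\sqrt{k}$, or else a way to delete the offending columns while preserving $\detlb = \Omega(D)$; neither is known, and both seem to require more than matrix-scaling boilerplate. So your proposal is best read as a correct account of why the conjecture is nontrivial, not as a proof. Two small remarks: your dual is missing no constraints, but be careful that the columns you keep after any deletion step must still yield a submatrix whose $\ell\times\ell$ subdeterminants are controlled by $D^{\ell}$ (this is inherited automatically from $A$, so it is fine, but worth stating); and the ``bypass'' bound $\hpvdisc(A) \ge \Omega(\detlb(A)/\sqrt{\log m \log n})$ you derive from Lemma~\ref{lem:bansal} plus the Lov\'asz--Spencer--Vesztergombi inequality is indeed the best unconditional lower bound currently available from the paper's machinery.
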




\section*{Acknowledgments}
We thank the anonymous reviewers of SOSA 2022 for insightful comments. 
We also thank Aleksandar Nikolov, Nikhil Bansal and Mehtaab Sawhney for helpful discussions.

\bibliographystyle{alpha}
\bibliography{bib.bib}

\end{document}